\documentclass[11pt]{article}
\usepackage{amssymb,amsmath,amsthm,latexsym,fullpage,tabularx,color,bbm}
\usepackage{enumerate}
\usepackage{xr-hyper}
\usepackage{hyperref}
\usepackage{graphicx}
\usepackage{subcaption}
\usepackage[title]{appendix}
\usepackage{float}
\usepackage{cleveref}
\usepackage{colortbl}

\usepackage[margin=1.15in]{geometry}

\numberwithin{equation}{section}

\def\Rset{\mathbb{R}}

\def\rme{\mathrm{e}}
\def\rmd{\mathrm{d}}

\def\esp{\mathbb{E}}

\def\var{\mathrm{var}}

\def\tn{\tilde{n}}

\definecolor{Gray}{gray}{0.85}

\newcommand\1[1]{\mathbbm{1}_{#1}}

\newtheorem{theorem}{Theorem}[section]
\newtheorem{lemma}[theorem]{Lemma}

\newtheorem{corollary}[theorem]{Corollary}

\begin{document}

{\renewcommand{\baselinestretch}{1.5}

\begin{titlepage}
    \title{
     \vspace{0in}
   On the Expectation of the Local-to-Zero Cross-Validated Log Likelihood Criterion for Bandwidth Selection in Kernel Spectral Estimation
     \vspace{.05in}
    }
      
  \author{Meng-Chen Hsieh  ~ \ \ Clifford Hurvich \footnote{Department of
  Technology, Operations and Statistics, Stern School of Business, New York University, USA.}}
      
 \date{\normalsize \today}
 \end{titlepage}
}

\maketitle

\begin{abstract}
  We consider data-driven bandwidth selection for a kernel spectral estimator at zero frequency based on a local-to-zero version of the cross validated log-likelihood (CVLL) criterion. The modified version is $\mbox{CVLL}_c$, based on a sum over Fourier frequencies from $1$ to $n^c$ with $0<c<1$, where $n$ is the sample size. We focus on the expectation of a key term in a Taylor series expansion for $\mbox{CVLL}_c$ and show that in the case $4/5 < c < 1$ it converges to the corresponding asymptotic mean squared error of the spectral estimator at zero frequency. This provides some justification for the use of the local CVLL criterion for Heteroskedasticity and Autocorrelation Consistent (HAC) standard error estimation. Our theoretical results do not follow from existing literature on CVLL because those results exploit the fact that CVLL is global, summing over all frequencies in $(0,\pi)$ rather than local-to-zero frequency, as is the case for $\mbox{CVLL}_c$.
\end{abstract}

\newpage
We consider data-driven bandwidth selection for a kernel spectral estimator at zero frequency based on a local-to-zero version of the cross validated log-likelihood (CVLL) criterion. The modified version is $\mbox{CVLL}_c$, based on a sum over Fourier frequencies from $1$ to $n^c$ with $0<c<1$, where $n$ is the sample size. We focus on the expectation of a key term in a Taylor series expansion for $\mbox{CVLL}_c$ and show that in the case $4/5< c < 1$ it converges to the corresponding asymptotic mean squared error of the spectral estimator at zero frequency. This provides some justification for the use of the local CVLL criterion for Heteroskedasticity and Autocorrelation Consistent (HAC) standard error estimation. Our theoretical results do not follow from existing literature on CVLL because those results exploit the fact that CVLL is global, summing over all frequencies in $(0,\pi)$ rather than local-to-zero frequency, as is the case for $\mbox{CVLL}_c$.

Assume that $\{X_t\}$ is a weakly stationary short-memory time series with positive spectral density $f(\omega)$, $\omega \in [-\pi, \pi]$.  We represent $\{X_t\}$ as
\begin{equation}\label{eq:def_xt}
X_t = \mu + \sum_{j=0}^{\infty}\beta_j \epsilon_{t-j} 
\end{equation}
with $\sum_{j=0}^{\infty}j^{\frac{1}{2}}|\beta_j| < \infty$, where $\{\epsilon_t\}$ is iid with mean zero, variance $\sigma^2_{\epsilon}=1$, and finite fourth moment. Suppose further that $\{X_t\}$ has lag-$r$ autocovariance $\gamma_r$ with $\sum_{r=-\infty}^{\infty}r^2 |\gamma_r| < \infty$.
The long run variance of the sample mean $\bar{X}_n = \frac{1}{n}\sum_{t=1}^n X_t$ is
\begin{equation}\label{eq:stand_error_freq}
\lim_{n \rightarrow \infty}  n \; \var\left(\bar{X}_n\right) = \sum_{r=-\infty}^\infty \gamma_r = 2\pi f(0) \;.
\end{equation}
Let $f(\omega_j)$ be the spectral density of $\{X_t\}$ at the fourier frequency $\omega_j=\frac{2 \pi j}{n}$, $j=0,1,2,\ldots$.
Define the periodogram by $I_j=\frac{2\pi}{n}|\sum_{t=1}^n x_t \exp(-i\omega_j t)|^2$. Note that we do not correct for the sample mean in the definition of $I_j$. Mean correction
has no effect on $I_j$ for $j \ne 0$.
Denote by $\hat{f}_j$ the discrete periodogram average estimate of $f(\omega_j)$
\begin{equation}\label{eq:def_ave_pgrm_spec_est}
 \hat{f}_j = \hat{f}_m(\omega_j)=\frac{2\pi}{n}\sum_{\ell=1}^{n-1}K_m\left(\omega_j - \omega_{\ell} \right)I(\omega_{\ell})\;,
\end{equation}
where $K_m(\omega)=m\sum_{j=-\infty}^{\infty}K(m(\omega+2\pi j))$. Since the estimator does not use the periodogram at zero frequency, the estimator is invariant to the addition of a constant to the time series. We therefore henceforth assume without loss of generality that $\mu=0$ and we do not correct sample autocovariances for the sample mean. 
The function $K$ is the \textit{spectral window} which satisfies
\[
\int_{\Rset}|K(\omega)| \rmd \omega < \infty \notag \;, \;\; \int_{\Rset}K(\omega) \rmd \omega = 1 \;.
\]
The Fourier inverse of $K$ is called the \textit{lag window} or kernel
\begin{equation}
k(x)=\int_{\Rset}K(\omega)\rme^{ix\omega}\rmd \omega < \infty \;, \; x\in \Rset
\end{equation}
which is a key component in the widely used \textit{lag-weights} estimate of $f(\omega_j)$  
\begin{equation}\label{eq:lag_weights_est}
\tilde{f}_j = \frac{1}{2\pi}\sum_{|r|\leq m}k\left(\frac{|r|}{m}\right)\hat{\gamma}_r \rme^{ir\omega_j} \;,
\end{equation}
where $\hat{\gamma}_r$ is the sample lag-$r$ autocovariance, $\hat{\gamma}_r=\frac{1}{n}\sum_{t=|r|+1}^n x_t x_{t-|r|}$. We assume that the lag window function $k(x)$ is even, satisfying  $k(0)=1$, $k(x) \leq 1$ if $|x|\leq 1$; $k(x)=0$, if $|x|>1$.  It follows that for any lag $r$ with $|r| > m$, $k(r/m) = 0$ and $k(r/m) \leq 1$,  if $|r| \leq  m$. Here, $m$ is called the \textit{truncation point} or the (inverse of the) \textit{bandwidth}. For simplicity and definiteness, we will henceforth assume that $k(x)$ follows Parzen's lag window defined by Eq.(\ref{eq:Parzen_lag_window}), which has a finite truncation point $m$.

%Following Robinson's (1991), we apply the following formula to construct the \textbf{leave-one-out} spectral density estimate of $f(\omega_j)$ from a sample time series $x_1, x_2, \ldots, x_n$
%\begin{equation}\label{eq:def_LOO_spec_est}
% \hat{f}^{m}_{(j)}(\omega_j) = \frac{2\pi}{n}\sum_{\substack{\ell=1 \\ \ell \neq j, n-j}}^{n}K_m\left(\omega_j - \omega_{\ell} \right)I(\omega_{\ell})\;,
%\end{equation}
%where $I(\omega_{\ell})$ is the periodogram at $\omega_{\ell}$. The subscript $(j)$ in $\hat{f}^{m}_{(j)}(\omega_j)$ indicates that all frequencies except for $\omega_j$ and $\omega_{n-j}$ are utilized to estimate $f(\omega_j)$. 
%The fourier inverse of $K$ is called the \textbf{lag window} or kernel
%\begin{equation}
%k(x)=\int_{\Rset}K(\omega)\rme^{ix\omega}\rmd \omega < \infty \;, \; x\in \Rset
%\end{equation}
%which is a key component in the widely used \textbf{lag-weights} estimate of $f(\omega_j)$  
%\begin{equation}\label{eq:lag_weights_est}
%\hat{f}(\omega_j) = \frac{1}{2\pi}\sum_{|r|\leq m}k\left(\frac{|r|}{m}\right)\hat{c}_r \rme^{ir\omega_j} \;,
%\end{equation}
%where $\hat{c}_r$ is the sample lag-r autocovariance of time series $\{x_t\}$. The lag window function $k(x)$ is even, satisfying  $k(0)=1$, $k(x) \leq 1$ if $|x|\leq 1$; $k(x)=0$, if $|x|>1$.  It follows that for any lag number $|r| > m$, $k(r/m) = 0$ and $k(r/m) \leq 1$,  if $|r| \leq  m$. Hence $m$ is called the truncation point or the bandwidth in the nonparametric spectrum estimate literature.  

To select the optimal $m$, one may wish to minimize the mean integrated squared percentage error (MISPE)
\begin{equation}\label{eq:MISLE_discrete}
\mbox{MISPE}(m) = \esp \left[\frac{1}{n^{\prime}}\sum_{j=1}^{n^{\prime}} \left( \frac{\hat{f}_m(\omega_j)-f(\omega_j)}{f(\omega_j)}\right)^2  \right] \;,
\end{equation}
where $n^{\prime}=\lfloor \frac{n-1}{2}  \rfloor$.
However, this criterion is not feasible as it depends on the unknown spectral density $f(\omega_j)$. 
Beltrao and Bloomfield (1987) introduced a leave-one-out cross validation version of Whittle’s approximation to minus twice the Gaussian log-likelihood 
\begin{equation}\label{eq:CVLL_BB}
\mbox{CVLL}(m) = \frac{1}{n^{\prime}}\sum_{j=1}^{n^{\prime}}\left[\hat{f}^m_{(j)}(\omega_j)+ \frac{I(\omega_j)}{\hat{f}^m_{(j)}(\omega_j)}  \right] \;,
\end{equation}
and they further showed that for a given $m$, $\mbox{CVLL}(m)$ is asymptotically equivalent to $\mbox{MISPE}(m)$. Their approach is feasible since $\hat{f}^m_{(j)}$ is the leave-one-out estimate of $f(\omega_j)$, which can be directly computed from the data. Robinson (1991) proved the stronger result that the minimizer of $\mbox{CVLL}$ is asymptotically equivalent to the minimizer of $\mbox{MISPE}$. We follow Robinson (1991) to construct the leave-one-out spectral density estimate of $f(\omega_j)$ from a sample $x_1, x_2, \ldots, x_n$
\begin{equation}\label{eq:def_LOO_spec_est}
 \hat{f}^{m}_{(j)}(\omega_j) = \frac{2\pi}{n}\sum_{\substack{\ell=1 \\ \ell \neq j, n-j}}^{n-1}K_m\left(\omega_j - \omega_{\ell} \right)I(\omega_{\ell})\;,
\end{equation}
where $I(\omega_{\ell})=I_{\ell}$ is the periodogram at $\omega_{\ell}$. The subscript $(j)$ in $\hat{f}^{m}_{(j)}(\omega_j)$ indicates that all frequencies except for $\omega_j$ and $\omega_{n-j}$ are utilized to estimate $f(\omega_j)$. 

In view of Eq.(\ref{eq:stand_error_freq}) and related formulas, HAC (heteroskedasticity and autocorrelation consistent) standard error estimation focuses on the estimation of $f(0)$. Given this focus on low frequencies, Xu and Hurvich (2026) proposed a local-to-zero cross-validated log likelihood for estimating $f(0)$ 
\begin{equation}\label{eq:CVLL_H}
\mbox{CVLL}_c(m) =  \frac{1}{\tilde{n}}\sum_{j=1}^{\tilde{n}}\left[\hat{f}^m_{(j)}(\omega_j)+ \frac{I(\omega_j)}{\hat{f}^m_{(j)}(\omega_j)}  \right]\;,
\end{equation}
where $\tilde{n} = n^c$, $0 < c < 1$. Since the frequencies used in $\mbox{CVLL}_c(m)$ all lie in a neighborhood that shrinks to zero, it can be hoped that minimizing $\mbox{CVLL}_c(m)$ is asymptotically equivalent to minimizing 
\begin{equation}
\esp \left[\left( \frac{\hat{f}_m(0)-f(0)}{f(0)}\right)^2  \right] \;.
\end{equation}    

For simplicity, denote $f_j = f(\omega_j)$, $I_j = I(\omega_j)$, $\hat{f}^j_j = \hat{f}^{m}_{(j)}(\omega_j)$ the leave-one-out estimate (see Eq.(\ref{eq:def_LOO_spec_est})).
If $f$ is twice continuously differentiable at zero frequency (as we assume here), the asymptotically optimal value of $m$ is proportional to $n^{1/5}$. Thus, we define $\tau$
by $m=\tau n^{1/5}$ and focus henceforth on selection of $\tau$. In the sequel for brevity, we will write $\mbox{CVLL}_c(\tau)$ instead of $\mbox{CVLL}_c(\tau n^{\frac{1}{5}})$.
It follows from Parzen (1957) that 
\begin{equation}
n^{\frac{4}{5}}\;\esp \left[\left( \frac{\hat{f}_{\tau n^{\frac{1}{5}}}(0)-f(0)}{f(0)}\right)^2  \right] \rightarrow c_0(\tau)\;,
\end{equation}   
where 
\begin{equation}\label{def:co_ttau}
c_o(\tau) = \frac{1}{2}\left\{\tau\kappa + \tau^{-4}\left(h^{(2)}\frac{f^{(2)}(0)}{f(0)}\right)^2 \right\}\;,
\end{equation}
$\kappa=\int_{\Rset}k^2(x)\rmd x$, $k(x)$ is Parzen's lag window function defined by Eq.(\ref{eq:Parzen_lag_window}), 
$f^{(2)}(0)$ is the second spectral derivative evaluated at $0$,
and $h^{(2)}$ is the characteristic coefficient defined by Eq.(\ref{def:char_coef}). The two additive terms in Eq.(\ref{def:co_ttau}) may be viewed as asymptotic variance and squared bias, respectively. 

Define
\begin{eqnarray}
\tilde{L} = \tilde{L}(f) &=& \sum_{j=1}^{\tn} \left(\log f_j + \frac{I_j}{f_j}  \right)\;, \\
\tilde{L}(\tau) &=& \sum_{j=1}^{\tn} \left(\log \hat{f}^j_j + \frac{I_j}{\hat{f}^j_j}  \right) = \tilde{n}\; \mbox{CVLL}_c(\tau)\;.
\end{eqnarray}
\\
Note that $\tilde{L}$ does not depend on $\tau$. By Taylor series expansion, 
\begin{equation}\label{eq:dif_L_mod}
\tilde{L}(\tau) - \tilde{L} =\sum_{j=1}^{\tn}\left(\frac{\hat{f}^j_j}{f_j}-1\right)\left(1-\frac{I_j}{f_j}\right)
           -\sum_{j=1}^{\tn} \left(\frac{\hat{f}^j_j}{f_j}-1\right)^2\left(1-\frac{I_j}{f_j} \right)
           + \frac{1}{2}\sum_{j=1}^{\tn} \left(\frac{\hat{f}^j_j}{f_j}-1 \right)^2
           + O_p\left(\tn\left(\frac{\hat{f}^j_j}{f_j}-1 \right)^3 \right)\;.
\end{equation}
Hence
\begin{eqnarray}
n^{4/5-c}\left[\tilde{L}(\tau) - \tilde{L}\right]
&=& n^{4/5-c}\left[\sum_{j=1}^{\tn}\left(\frac{\hat{f}^j_j}{f_j}-1\right)\left(\frac{I_j}{f_j}-1 \right)\right] \label{eq:likhood_diff_1}\\
&-& n^{4/5-c}\left[\sum_{j=1}^{\tn}\left(\frac{I_j}{f_j}-1 \right)\left(\frac{\hat{f}^j_j}{f_j}-1\right)^2 \right] \label{eq:likhood_diff_2}\\
&+& \frac{1}{2}\left[n^{4/5-c}\sum_{j=1}^{\tn}\left(\frac{\hat{f}^j_j}{f_j}-1\right)^2 \right] \label{eq:likhood_diff_3}\\
&+& O_p\left(\tn\left(\frac{\hat{f}^j_j}{f_j}-1 \right)^3 \right)\label{eq:likhood_diff_4}
\end{eqnarray}
\\
Using $\max_{\tau,j}|\hat{f}^j_j - f_j| = O_p(n^{-2/5})$ from Robinson (1991)'s (C.8), it follows that the last term
\[
O_p\left(\tn\left(\frac{\hat{f}^j_j}{f_j}-1 \right)^3 \right) = O_p(n^{c-6/5}) = o_p(1)\;,
\]
since $0 < c < 1$.
\\
The asymptotic properties of $\tilde{L} (\tau) - \tilde L$ do not follow from, and cannot be easily derived from those for the non-local CVLL established by Robinson (1991). One key reason is that Robinson (1991) relied in a crucial way on Parseval's formula (see bottom of page 1357 of Robinson (1991), and compare with our Eq.(\ref{eq:norm_esp_gjSqr}) below), and Parseval's formula does not hold for sums that are local to zero frequency. For simplicity, we set the more modest goal of examining the expectation, $E[\tilde L (\tau) - \tilde L]$ and whether it behaves like $c_0 (\tau)$. More specifically, we will focus only on the expectation of (\ref{eq:likhood_diff_3}). Some justification for ignoring the expectation of (\ref{eq:likhood_diff_4}) was provided above. In the extremely special case where $\{X_t\}$ is Gaussian white noise, the expectation of (\ref{eq:likhood_diff_1}) and (\ref{eq:likhood_diff_2}) is exactly zero, due to the independence of $\{I_j\}$ and the use of the leave-one-out estimate. We believe that it could be proved that the expectations of (\ref{eq:likhood_diff_1}) and (\ref{eq:likhood_diff_2}) are negligible more generally under suitable regularity conditions, but we do not pursue this here.
We have the following theorem: 
\begin{theorem}
Let $m = \tau n^{1/5}$ and $\tilde{n}=n^c$,  where $\tau> 0$, and $4/5 <c <1$. 
As $n \to \infty$, 
\label{prop:exp_CVLL_convergence}
\begin{equation}\label{eq:true_CVLL_H} 
\frac{1}{2}\left[n^{4/5-c}\sum_{j=1}^{\tn}\esp\left(\frac{\hat{f}^j_j}{f_j}-1\right)^2 \right] \to c_0(\tau) \;.
\end{equation}
\end{theorem}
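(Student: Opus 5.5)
We will decompose each summand into a variance part and a squared bias part:
\[
\esp\left(\frac{\hat{f}^j_j}{f_j}-1\right)^2 = \frac{\var(\hat{f}^j_j)}{f_j^2} + \left(\frac{\esp \hat{f}^j_j - f_j}{f_j}\right)^2 .
\]
The goal is to show, uniformly in $1\le j\le \tn$, that
\[
n^{4/5}\var(\hat{f}^j_j)/f_j^2 \to \tau\kappa
\quad\text{and}\quad
n^{2/5}(\esp\hat{f}^j_j-f_j)/f_j \to \pm\tau^{-2}h^{(2)}f^{(2)}(0)/f(0).
\]
Summing over $j$ then gives $\tn$ times the limit, and multiplying by $\frac12 n^{4/5-c}=\frac12 n^{4/5}/\tn$ yields $c_0(\tau)$. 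Both $\var(\hat f^j_j)$ and the bias are pointwise Parzen-type quantities, so the local sum is essentially an average of $\tn$ nearly identical terms. This average structure replaces Robinson's global Parseval argument.

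\textbf{Step 1 (from leave-one-out to the full estimate).} We have $\hat{f}^j_j=\hat{f}_j-\frac{2\pi}{n}K_m(0)(I_j+I_{n-j})$, with $K_m(0)=O(m)$. Hence the removed piece has mean and standard deviation $O(m/n)=O(n^{-4/5})$. This is $o(n^{-2/5})$, so by Cauchy--Schwarz it changes $n^{4/5}\esp(\hat f^j_j/f_j-1)^2$ by $o(1)$ uniformly in $j$. It therefore suffices to work with $\hat f_j$. By the standard identity for the discrete average, $\hat f_j$ equals the lag-window form $\tilde f_j$ with $k$ replaced by its aliased version $\sum_p k((r+pn)/m)$. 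Because $k$ has support $[-1,1]$ and $m<n$, this is exactly $\tilde f_j$ up to the omitted $\ell=0$ term. That term is $\frac{2\pi}{n}K_m(\omega_j)I_0$, and we will show it is negligible because $K_m(\omega_j)$ is small for $m\omega_j$ large. For the few $j$ with $m\omega_j=O(1)$, the count is $O(n/m)=O(n^{4/5})$, which is $o(\tn)$ precisely because $c>4/5$. This is one place where the condition on $c$ enters.

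\textbf{Step 2 (bias).} We write
\[
\esp\tilde f_j=\frac{1}{2\pi}\sum_{|r|\le m}k(r/m)(1-|r|/n)\gamma_r e^{ir\omega_j}.
\]
Subtracting $f_j=\frac1{2\pi}\sum_r\gamma_r e^{ir\omega_j}$ and using Parzen's expansion $1-k(x)=h^{(2)}x^2+o(x^2)$ together with $\sum r^2|\gamma_r|<\infty$, we get
\[
\esp\tilde f_j-f_j=-\frac{h^{(2)}}{2\pi m^2}\sum_r r^2\gamma_r e^{ir\omega_j}+o(m^{-2})+O(n^{-1}).
\]
The leading sum equals $h^{(2)}f^{(2)}(\omega_j)/m^2$. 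Since $\omega_j\le 2\pi n^{c-1}\to0$ and $f^{(2)}$ is continuous at $0$ (from $\sum r^2|\gamma_r|<\infty$), this is $h^{(2)}f^{(2)}(0)/m^2+o(m^{-2})$ uniformly in $j\le\tn$. Likewise $f_j\to f(0)$ uniformly, so the squared relative bias times $n^{4/5}$ tends to $\tau^{-4}(h^{(2)}f^{(2)}(0)/f(0))^2$ uniformly.

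\textbf{Step 3 (variance), the main obstacle.} Using the linear-process representation with finite fourth moment, $\var(\tilde f_j)$ splits into a Gaussian-type part plus a fourth-cumulant part. The cumulant part is $O(\kappa_4 (\sum|\beta_j|)^4/n)=o(n^{-4/5})$. The Gaussian part is the classical Parzen calculation:
\[
\frac{n}{m}\var(\tilde f_j)\to f(\omega)^2\int k^2\,(1+\eta(\omega)),
\]
where $\eta(\omega)=1$ at $\omega\in\{0,\pi\}$ and $\eta(\omega)=0$ otherwise. The delicate point is that $\omega_j\to0$, so we must quantify the transition. The doubling term is governed by $|\sum_r k(r/m)^2 e^{2ir\omega_j}|/m$, which behaves like $|\widehat{k^2}(2m\omega_j)|$ and is small only when $m\omega_j\to\infty$. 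Writing $\var(\tilde f_j)$ exactly as a quadratic form in the $\gamma$'s, I will bound the doubling term by
\[
C\,\frac{m}{n}\min\bigl(1,(m\omega_j)^{-1}\bigr).
\]
Here I use the bounded variation of $k^2$, which for Parzen's window gives decay $(m\omega_j)^{-2}$ or better. Summing over $j\le\tn$, the doubling contributions total $O(\frac mn\cdot\frac nm\log n)=o(\tn\, m/n)$ because $\tn\gg n^{4/5}\log n$ when $c>4/5$. This is the second and decisive use of the hypothesis $c>4/5$. The remaining terms give $\frac mn f_j^2\kappa(1+o(1))$ uniformly. With $m/n=\tau n^{-4/5}$, the relative variance times $n^{4/5}$ averages to $\tau\kappa$, and combining Steps 1--3 proves the statement.
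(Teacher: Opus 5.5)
Your proposal is correct in outline, but it handles the variance by a different route than the paper.

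\textbf{How the paper does it.} The paper never computes $\var(\tilde f_j)$ for the linear process. It invokes Robinson's (C.7) and (C.9) to replace $(\tilde f_j-\esp\tilde f_j)/f_j$ by $g_j^{\ast}=2\pi g_j-1$, the lag-window estimate built from the iid innovations. It then computes $\esp[{g_j^{\ast}}^2]$ exactly from fourth moments of $\epsilon_t$, obtaining $\frac1n(\esp\epsilon_t^4-1)+\sum_{r\neq 0}\frac{n-|r|}{n^2}k^2(r/m)\bigl(1+\rme^{2ir\lambda_j}\bigr)$. Finally it averages over $j$ \emph{before} bounding, so the doubling term becomes $n^{-1/5}\sum_{0<|r|\le m}|D_{\tn}(\omega_{2r})|$. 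Lemma~\ref{lem:Willa_Cliff} controls this using only $0\le k\le 1$.

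\textbf{How yours differs.} You bound the doubling term frequency by frequency (summation by parts, bounded variation of $k^2$) and then sum $\min(1,(m\omega_j)^{-1})$ over $j$. The two orders of summation give the same $O(n^{4/5-c}\log n)$ rate and the same threshold $c>4/5$. Your bias step is essentially Lemma~\ref{lem:convg_bias_square}.

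\textbf{What each buys.} The paper's route gives an exact, elementary variance computation with no cumulant or end-effect bookkeeping. Yours gives a self-contained argument in which every reduction is made in $L^2$ (means, standard deviations and Cauchy--Schwarz). That is what a statement about $\esp(\cdot)^2$ actually needs; the $O_p$ bounds the paper borrows from Robinson do not by themselves control second moments. The price is that you must actually carry out the Parzen--Brillinger variance calculation for a linear process: the cumulant term of order $1/n$, end effects, and uniformity as $\omega_j\to 0$. You only sketch this.

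\textbf{Corrections needed in Step 1.} None of these is fatal.
\begin{enumerate}
\item Since $\omega_j-\omega_{n-j}\equiv 2\omega_j \pmod{2\pi}$, the removed leave-one-out piece is $\frac{2\pi}{n}\bigl[K_m(0)I_j+K_m(2\omega_j)I_{n-j}\bigr]$, not $\frac{2\pi}{n}K_m(0)(I_j+I_{n-j})$. Your $O(m/n)$ bound survives because $0\le K_m(\omega)\le K_m(0)=O(m)$ for Parzen's window.
\item The full discrete average is not exactly $\tilde f_j$ plus the $\ell=0$ term. Folding $I_\ell$ at the Fourier frequencies produces circular autocovariances, i.e.\ an extra term $\frac{1}{2\pi}\sum_{n-m<|r|<n}k\bigl((n-|r|)/m\bigr)\hat\gamma_r\rme^{ir\omega_j}$. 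The aliased weight is nonzero for $|r|>n-m$ whatever the support of $k$. This term has $L^2$ norm $O(m^2/n)=O(n^{-3/5})=o(n^{-2/5})$, so it is negligible, but you have to show it.
\item The $\ell=0$ term is $O(m/n)$ in $L^2$ uniformly in $j$, again because $K_m(\omega_j)\le K_m(0)$ and $\esp I_0^2=O(1)$ with $\mu=0$. So no counting argument is needed there, and $c>4/5$ is not used at that point. Its only genuine role is in the doubling term of Step 3.
\end{enumerate}
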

Thus, for each fixed $\tau$, the expectation of Eq.(\ref{eq:likhood_diff_3}) appearing in the Taylor series expansion for $\mbox{CVLL}_c(\tau)$ converges to the asymptotic mean squared percentage error of the discrete periodogram average estimate of $f(0)$. 

\begin{proof}
Let $\hat{f}_j$ be the discrete periodogram average estimate of $f(\omega_j)$ given by Eq.(\ref{eq:def_ave_pgrm_spec_est})
%\begin{equation}\label{eq:def_LOO_spec_est}
% \hat{f}_j = \frac{2\pi}{n}\sum_{\ell=1}^{n-1}K_m\left(\omega_j - \omega_{\ell} \right)I(\omega_{\ell})\;,
%\end{equation}
and $\tilde{f}_j$ the corresponding lag-weights estimate of $f(\omega_j)$ (See Eq.(\ref{eq:lag_weights_est})).
Let $g_j$ be the lag-weights spectral density estimate for $\{\epsilon_t\}$ at $\omega_j$, 
\begin{equation}\label{def:g_j}
g_j = \frac{1}{2\pi}\sum_{|r| \leq m} k(r/m)\rme^{ir\lambda_j}\hat{\gamma}_{\epsilon_r} \;, 
\end{equation}
where $\hat{\gamma}_{\epsilon_r}=\frac{1}{n}\sum_{t=|r|+1}^n \epsilon_t\epsilon_{t-|r|}$. 
Define $g^{\ast}_j = 2\pi g_j-1$. Then $\esp[2 \pi g_j] = 1$ and $\esp[g_j^{\ast}] = 0$.
\\
We can express
\begin{eqnarray}\label{eq:decom_ffj_ratio}
\frac{\hat{f}^j_j}{f_j}-1 &=& (\hat{f}^j_j-\tilde{f}_j)/f_j + \left\{(\tilde{f}_j-\esp \tilde{f}_j)/f_j - (2\pi g_j-1) \right\} + \left[(2\pi g_j-1) + \esp \tilde{f}_j/f_j -1 \right] \notag \\
&=& (\hat{f}^j_j-\tilde{f}_j)/f_j + \left\{(\tilde{f}_j-\esp \tilde{f}_j)/f_j - g^{\ast}_j \right\} + (g^{\ast}_j + \esp \tilde{f}_j/f_j -1 )\;. \notag \\
\end{eqnarray}
Since $\max_{\tau, j}|\hat{f}^j_j - \tilde{f}_j| = O_p(n^{-3/5})$ and $\max_{\tau,j}\left|(\tilde{f}_j-\esp \tilde{f}_j)/f_j -g^{\ast}_j \right|= O_p(n^{-2/5}/\log n)$ (See Robinson (1991)'s (C.7) and (C.9)). It follows that the first two terms on the RHS of Eq.(\ref{eq:decom_ffj_ratio}) are negligible. 
Therefore, to prove Eq.(\ref{eq:true_CVLL_H}), it suffices to show that
\begin{equation}
{n}^{4/5-c}\sum_{j=1}^{\tn}\esp\left\{g^{\ast}_j + \esp \tilde{f}_j/f_j -1 \right\}^2 - 2 c_0(\tau) \rightarrow 0 \;.
\end{equation}
\\
Expanding the squared term, we obtain
\begin{eqnarray}
&&{n}^{4/5-c}\sum_{j=1}^{\tn}\esp \left\{g^{\ast}_j + \esp \tilde{f}_j/f_j -1 \right\}^2  \notag \\
&=&  {n}^{4/5-c}\sum_{j=1}^{\tn} \esp\left[{g^{\ast}_j}^2\right] \\
& + & 2 {n}^{4/5-c}\sum_{j=1}^{\tn} \esp\left[g^{\ast}_j \right]\left(\esp \tilde{f}_j/f_j -1\right) \\
&+&  {n}^{4/5-c}\sum_{j=1}^{\tn}\esp\left(\esp \tilde{f}_j/f_j -1\right)^2
\end{eqnarray}
Since $\esp[g^{\ast}_j]=0$, by Lemma \ref{lem:convg_bias_square} and Lemma \ref{lem:convg_exp_g}, it follows that
\begin{equation}
{n}^{4/5-c}\sum_{j=1}^{\tn}\esp \left\{g^{\ast}_j + \esp \tilde{f}_j/f_j -1 \right\}^2  \rightarrow \tau \kappa + b_0(\tau)\;.
\end{equation} 

\end{proof}

\begin{lemma}\label{lem:convg_bias_square}
Assume that $\tau \in [u,v]$, $ 0 < u < v < \infty$ and $ 0<c < 1$. Then
\begin{equation}\label{eq:3rd_conveg}
\max_{\tau}\left|{n}^{4/5-c}\sum_{j=1}^{\tn}\left(\esp[\tilde{f}(\omega_j)]/f(\omega_j) -1\right)^2 - b_0(\tau)\right| \stackrel{}{\rightarrow} 0 \;,
\end{equation}
\noindent where
\begin{equation}
b_0(\tau) = \tau^{-4}\left(h^{(2)}\frac{f^{(2)}(\omega_0)}{f(\omega_0)}\right)^2  \;,
\end{equation}

\noindent $h^{(2)}$ is the characteristic coefficient defined by
\begin{equation}\label{def:char_coef}
h^{(2)} = \lim_{z \to 0}\frac{1-k(z)}{|z|^2}\;,
\end{equation}

\noindent $k(z)$ is Parzen's lag window, $f^{(2)}(\omega_0)$ is the second spectral derivative evaluated at $\omega_0$,

\begin{equation}
 f^{(2)}(\omega_0) = \sum_{r=-\infty}^{\infty}|r|^{2}\gamma_r\;.
\end{equation}

\end{lemma}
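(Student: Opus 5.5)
The plan is to show that, uniformly in $1\le j\le \tn$ and $\tau\in[u,v]$, the relative bias satisfies
\[
m^{2}\left(\esp[\tilde f(\omega_j)]/f(\omega_j)-1\right)\to -h^{(2)}\frac{f^{(2)}(0)}{f(0)}
\]
(with $\omega_0=0$, and with the normalisations of $f$ and $f^{(2)}$ taken exactly as in the paper's conventions, so the limiting constant matches $b_0$). Granting this, each of the $\tn$ summands equals $m^{-4}\{(h^{(2)}f^{(2)}(0)/f(0))^2+o(1)\}$, with the $o(1)$ uniform. Hence
\[
n^{4/5-c}\sum_{j=1}^{\tn}\left(\esp\tilde f_j/f_j-1\right)^2
= n^{4/5-c}\,n^{c}\,\tau^{-4}n^{-4/5}\{(h^{(2)}f^{(2)}(0)/f(0))^2+o(1)\}
= b_0(\tau)+\tau^{-4}o(1).
\]
Since $\tau^{-4}\le u^{-4}$ on $[u,v]$, the maximum over $\tau$ of the error tends to zero.

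\textbf{Bias decomposition.} Using $\esp\hat\gamma_r=(1-|r|/n)\gamma_r$ and $f(\omega)=\frac{1}{2\pi}\sum_r\gamma_r\rme^{ir\omega}$, I will write
\[
\esp\tilde f_j-f_j=-\frac{1}{2\pi}\sum_{r}\bigl(1-k(r/m)\bigr)\gamma_r\rme^{ir\omega_j}-\frac{1}{2\pi}\sum_{|r|\le m}k(r/m)\frac{|r|}{n}\gamma_r\rme^{ir\omega_j}.
\]
\begin{itemize}
\item The second sum is bounded by $n^{-1}\sum_r|r||\gamma_r|=O(n^{-1})$ uniformly in $j$ and $\tau$. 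Since $m^2/n\le v^2n^{-3/5}\to0$, it is negligible after multiplying by $m^2$.
\item The first sum I rewrite as $-m^{-2}\frac{1}{2\pi}\sum_r \psi(r/m)\,r^2\gamma_r\rme^{ir\omega_j}$, where $\psi(z)=(1-k(z))/z^2$.
\end{itemize}
For Parzen's window, $\psi$ is bounded on $\Rset$: it is continuous near $0$ with limit $h^{(2)}$, and it equals $z^{-2}\le1$ for $|z|>1$. Also $\psi(z)\to h^{(2)}$ as $z\to0$.

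\textbf{Uniform limit (main obstacle).} The key step is passing to the limit uniformly in $j\le\tn$ and $\tau\in[u,v]$. I will bound
\[
\Bigl|\sum_r \psi(r/m)r^2\gamma_r\rme^{ir\omega_j}-h^{(2)}\sum_r r^2\gamma_r\Bigr|
\le \sum_r |\psi(r/m)-h^{(2)}|\,r^2|\gamma_r| + \|\psi\|_\infty\sum_r r^2|\gamma_r|\min(2,|r|\omega_{\tn}).
\]
Both terms are free of $j$, because $\omega_j\le\omega_{\tn}=2\pi n^{c-1}\to0$. For the $\tau$-dependence, I will use $m\ge u n^{1/5}$ and split at $|r|\le R$ versus $|r|>R$.
\begin{itemize}
\item For $|r|\le R$, we have $\sup_{|z|\le R/(un^{1/5})}|\psi(z)-h^{(2)}|\to0$ and $|r|\omega_{\tn}\le R\,\omega_{\tn}\to0$.
\item For $|r|>R$, the tail is at most $C\sum_{|r|>R}r^2|\gamma_r|$, which is small for large $R$ by the assumption $\sum r^2|\gamma_r|<\infty$.
\end{itemize}
Finally, $f_j\to f(0)>0$ uniformly in $j\le\tn$ by continuity of $f$. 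So dividing by $f_j$ and squaring preserves the uniform convergence, and this yields the displayed limit and hence the lemma.
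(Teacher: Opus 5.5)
Your proof is correct. Its skeleton is the paper's: the same bias decomposition from $\esp\hat\gamma_r=(1-|r|/n)\gamma_r$, a negligible $m^2/n$ remainder, reduction of the average over $j\le\tn$ to a statement uniform in $j$, and uniform convergence of $m^2(\esp\tilde f_j/f_j-1)$ followed by squaring. The paper does the squaring via the factorisation $|a+b|\,|a-b|$, with one factor tending to zero and the other bounded.

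The real difference is how you obtain uniformity.

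\textbf{Uniformity in $\tau$.} The paper differentiates $\xi(m)=(k(r/m)-1)/(r/m)^2$ on both pieces of Parzen's window. This shows that $\xi$ is monotone in $m$ and that $0\le h^{(2)}-(1-k(r/m))/(r/m)^2\le h^{(2)}$. Hence the maximum over $\tau\in[u,v]$ is attained at $m_u=un^{1/5}$, and dominated convergence finishes. Your fixed-$R$ truncation needs only $m\ge un^{1/5}$, $\sup_z|\psi(z)|<\infty$ and $\psi(z)\to h^{(2)}$ as $z\to0$. It is shorter, and it works for any lag window with bounded $(1-k(z))/z^2$; the paper's monotonicity step is tied to Parzen's window.

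\textbf{Uniformity in $j$.} The paper first compares with $h^{(2)}f^{(2)}(\omega_j)/f(\omega_j)$ and then appeals to continuity of $f$ and $f^{(2)}$ at zero. You go straight to $\omega=0$ via $|\rme^{ir\omega_j}-1|\le\min(2,|r|\omega_{\tn})$, which is that continuity made explicit.

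\textbf{Smaller points.} You fold the paper's separate $|r|\le m$ and $|r|>m$ sums into one, using $\psi(z)=z^{-2}$ for $|z|\ge1$. Your $O(n^{-1})$ term also correctly carries the factor $\gamma_r$ that the paper's corresponding term drops. The paper's cruder bound $v^2n^{-3/5}\sum_{|r|\le m}|r|=O(n^{-1/5})$ still suffices there.

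\textbf{Constants.} Your limit is $-h^{(2)}\bigl(\frac{1}{2\pi}\sum_r r^2\gamma_r\bigr)/f(0)$. This matches $b_0$ under the convention $f^{(2)}(\omega)=\frac{1}{2\pi}\sum_r r^2\gamma_r\rme^{ir\omega}$ used inside the paper's proof. The statement's $f^{(2)}(\omega_0)=\sum_r|r|^2\gamma_r$ omits the $1/(2\pi)$; that inconsistency is in the paper, not in your argument.
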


\begin{proof}
Define Parzen's lag window function $k(z)$ by
\begin{eqnarray}\label{eq:Parzen_lag_window}
k(z) &=& 1-6z^2(1-|z|)\;, \;\;\;\;  0 \leq |z| \leq 1/2 \notag \\
     &=& 2(1-|z|)^3\;, \;\;\;\;\;\; \; 1/2 < |z| \leq 1 \notag \\
     &=& 0\;, \;\;\;\;\;\; \; |z| \geq 1 \;.
\end{eqnarray}
Hence
\begin{equation}\label{def:char_coef}
h^{(2)} = \lim_{z \to 0}\frac{1-k(z)}{|z|^2} = 6 \;.
\end{equation}

\noindent We can express
\begin{eqnarray}
\left|n^{4/5-c}\sum_{j=1}^{\tn}\left(\esp \tilde{f}(\omega_j)/f(\omega_j)-1 \right)^2 -b_0(\tau) \right|
%&=& \left|n^{4/5}\frac{1}{\tn} \sum_{j=1}^{\tn}\left(\esp \tilde{f}(\omega_j)/f(\omega_j)-1\right)^2 -b_0(\tau) \right| \notag \\
&=& \left| \frac{1}{\tn} \sum_{j=1}^{\tn}\left[n^{4/5}\left(\esp \tilde{f}(\omega_j)/f(\omega_j)-1\right)^2 -b_0(\tau) \right] \right| \notag \\
& \leq & \frac{1}{\tn}\sum_{j=1}^{\tn}\left|n^{4/5}\left(\esp \tilde{f}(\omega_j)/f(\omega_j)-1\right)^2 -b_0(\tau) \right| \notag \\
& \leq & \frac{1}{\tn}  \left(\tn\; \max_{j \in (1, \tn)}\left|n^{4/5}\left(\esp \tilde{f}(\omega_j)/f(\omega_j)-1\right)^2 -b_0(\tau) \right| \right)  \notag \\
&=& \max_{j \in (1, \tn)}\left|n^{4/5}\left(\esp \tilde{f}(\omega_j)/f(\omega_j)-1\right)^2 -b_0(\tau) \right|
\end{eqnarray}

\noindent It follows that
\begin{eqnarray}
&&\max_{\tau}\left|n^{4/5-c}\sum_{j=1}^{\tn}\left(\esp \tilde{f}(\omega_j)/f(\omega_j)-1 \right)^2 -b_0(\tau) \right| \notag \\
& \leq &  \max_{\tau,\; j \in (1, \tn)}\left|n^{4/5}\left(\esp \tilde{f}(\omega_j)/f(\omega_j)-1\right)^2 -b_0(\tau) \right| \notag \\
&=& \max_{\tau,\; j \in (1, \tn)}\left|n^{4/5}\left(\esp \tilde{f}(\omega_j)/f(\omega_j)-1\right)^2 -\tau^{-4}[h^{(2)}f^{(2)}(\omega_j)/f(\omega_j)]^2  \right| \label{eq:covg_b0_tau1} \\
&+& \max_{\tau,\; j \in (1, \tn)}\left| \tau^{-4}(h^{(2)}f^{(2)}(\omega_j)/f(\omega_j)^2 -b_0(\tau)  \right| \;, \label{eq:convg_b0_tau2}
\end{eqnarray}
where $f^{(2)}(\omega_j)$ is the second spectral derivative defined by
\begin{equation}
f^{(2)}(\omega_j) = \frac{1}{2\pi}\sum_{r=-\infty}^{\infty}r^2 \gamma_r \rme^{ir\omega_j}\;.
\end{equation}

\noindent First, we look at (\ref{eq:convg_b0_tau2}). Since $\tau \in (u,v)$, and $0 < u < v < \infty$,
\begin{eqnarray}
\max_{\tau,\; j \in (1, \tn)}\left| \tau^{-4}(h^{(2)}f^{(2)}(\omega_j)/f(\omega_j)^2 -b_0(\tau)  \right|
%&=& \max_{\tau,\; j \in (1, \tn)}\left| \tau^{-4}[h^{(2)}f^{(2)}(\omega_j)/f(\omega_j)]^2 - \tau^{-4}[h^{(2)}f^{(2)}(\omega_0)/f(\omega_0)]^2  \right| \notag \\
&=& \max_{\tau,\; j \in (1, \tn)} \tau^{-4}(h^{(2)})^2 \left|[f^{(2)}(\omega_j)/f(\omega_j)]^2 - [f^{(2)}(\omega_0)/f(\omega_0)]^2  \right| \notag \\
& \leq & u^{-4}\max_{j \in (1, \tn)}(h^{(2)})^2 \left|[f^{(2)}(\omega_j)/f(\omega_j)]^2 - [f^{(2)}(\omega_0)/f(\omega_0)]^2  \right| \notag \\
& = & u^{-4}\max_{j \in (1, \tn)}(h^{(2)})^2 \left|\frac{f^{(2)}(\omega_j)}{f(\omega_j)}+\frac{f^{(2)}(\omega_0)}{f(\omega_0)}\right|\left|\frac{f^{(2)}(\omega_j)}{f(\omega_j)} - \frac{f^{(2)}(\omega_0)}{f(\omega_0)}  \right| \notag\\
\end{eqnarray}
where
\begin{eqnarray}
\left|\frac{f^{(2)}(\omega_j)}{f(\omega_j)} - \frac{f^{(2)}(\omega_0)}{f(\omega_0)}  \right|
&\leq & \left| \frac{f^{(2)}(\omega_j)}{f(\omega_j)} - \frac{f^{(2)}(\omega_j)}{f(\omega_0)} \right|
+  \left|\frac{f^{(2)}(\omega_j)}{f(\omega_0)} - \frac{f^{(2)}(\omega_0)}{f(\omega_0)}  \right|  \notag \\
&=& \frac{1}{f(\omega_j)f(\omega_0)}\left| f^{(2)}(\omega_j)\big[f(\omega_0) - f(\omega_j) \big]\right|
 + \frac{1}{f(\omega_0)} \left|f^{(2)}(\omega_j) - f^{(2)}(\omega_0)  \right| \notag \\
\end{eqnarray}
$\forall \; 1 \leq j \leq \tn$, as $ n \to \infty$, $\omega_j \to \omega_0$.  By the continuity of $f$ and $f^{(2)}$, $f(\omega_j) \to f(\omega_0)$, and $f^{(2)}(\omega_j) \to f^{(2)}(\omega_0)$ uniformly in $j$, $\forall 1 \leq j \leq \tn$.

\noindent It follows that
\begin{equation}
\max_{j \in [1, \tn]}\left|\frac{f^{(2)}(\omega_j)}{f(\omega_j)} - \frac{f^{(2)}(\omega_0)}{f(\omega_0)}  \right|  \to 0\;.
\end{equation}
Since $\max_{j \in [1,\tn]}\left|\frac{f^{(2)}(\omega_j)}{f(\omega_j)}+\frac{f^{(2)}(\omega_0)}{f(\omega_0)}\right|$ is bounded, we prove
\[
\max_{\tau,\; j \in (1, \tn)}\left| \tau^{-4}(h^{(2)}f^{(2)}(\omega_j)/f(\omega_j)^2 -b_0(\tau)  \right| \to 0\;.
\]

\noindent Next we consider (\ref{eq:covg_b0_tau1}), which can be represented as
\begin{eqnarray}\label{eq:decomp_sq_diff}
&& \max_{\tau,\; j \in (1, \tn)}\left|n^{4/5}\left[\esp \tilde{f}(\omega_j)/f(\omega_j)-1\right]^2 -\tau^{-4}\left[h^{(2)}f^{2}(\omega_j)/f(\omega_j)^2\right]  \right| \notag \\
&=& \max_{\tau,\; j \in (1, \tn)} \frac{1}{f(\omega_j)^2}\left|n^{4/5}\left[\esp \tilde{f}(\omega_j)-f(\omega_j)\right]^2 - \tau^{-4}\left[h^{(2)}f^{(2)}(\omega_j)\right]^2   \right| \notag \\
&=& \max_{\tau,\; j \in (1, \tn)} \frac{1}{f(\omega_j)^2}\left|\left[n^{2/5}\left(\esp \tilde{f}(\omega_j)-f(\omega_j)\right)+\tau^{-2}h^{(2)}f^{(2)}(\omega_j)\right]
\left[n^{2/5}\left(\esp \tilde{f}(\omega_j)-f(\omega_j)\right)-\tau^{-2}h^{(2)}f^{(2)}(\omega_j)\right]\right| \notag \\
&=& \max_{\tau,\; j \in (1, \tn)} \frac{1}{f(\omega_j)^2\tau^2}\left|\tau^2 n^{2/5}\left(\esp \tilde{f}(\omega_j)-f(\omega_j)\right)+h^{(2)}f^{(2)}(\omega_j)\right|
\left|n^{2/5}\left(\esp \tilde{f}(\omega_j)-f(\omega_j)\right)-\tau^{-2}h^{(2)}f^{(2)}(\omega_j)\right| \notag \\ 
\end{eqnarray}
Note that $\esp[\hat{\gamma}_r] = \left[\frac{n-|r|}{n}\right]{\gamma}_r$.
We can then express
\begin{eqnarray}
\esp [\tilde{f}(\omega_j)]-f(\omega_j) &=& \frac{1}{2\pi}\sum_{|r|\leq m}k(r/m)\left(\frac{n-|r|}{n} \right){\gamma}_r \rme^{ir\omega_j} - \frac{1}{2\pi}\sum_{r=-\infty}^{\infty}{\gamma}_r\rme^{ir\omega_j} \notag \\
&=& \frac{1}{2\pi}\sum_{|r|\leq m}\left\{k(r/m)-\frac{|r|}{n}k(r/m)-1\right\}{\gamma}_r \rme^{ir\omega_j} -\frac{1}{2\pi}\sum_{|r|> m}{\gamma}_r\rme^{ir\omega_j} \notag \\
&=& \frac{1}{2\pi}\sum_{|r|\leq m}\left[k(r/m)-1 \right]\gamma_r \rme^{ir\omega_j}  \\
&-& \frac{1}{2\pi}\sum_{|r|\leq m}\left(\frac{|r|}{n}\right)k(r/m)\gamma_r \rme^{ir\omega_j}  \\
&-&  \frac{1}{2\pi}\sum_{|r| > m}\gamma_r \rme^{ir\omega_j}
\end{eqnarray}

\noindent Multiply by $m^2$ on both sides of the above equation, we obtain
\begin{eqnarray}
m^2 \left( \esp [\tilde{f}(\omega_j)]-f(\omega_j) \right) &=& \tau^2 n^{2/5}\left(\esp [\tilde{f}(\omega_j)]-f(\omega_j)\right) \notag \\
%&=& \frac{1}{2\pi}\sum_{|r|\leq m}\left[\frac{k(r/m)-1}{\left(\frac{1}{m}\right)^2}\right]{\gamma}_r \rme^{ir\omega_j} \\
%&-& \frac{1}{2\pi}\sum_{|r|\leq m}\left(\frac{m^2}{n}\right)|r|k(r/m)\rme^{ir\omega_j} \\
%&-&  \frac{1}{2\pi}\sum_{|r| > m} m^2 \gamma_r \rme^{ir\omega_j} \\
&=&  \frac{1}{2\pi}\sum_{|r|\leq m}\left[\frac{k(r/m)-1}{\left(\frac{r}{m}\right)^2}\right]r^2{\gamma}_r \rme^{ir\omega_j} \label{eq:m2bias_1}\\
&-& \frac{1}{2\pi}\left(\frac{m^2}{n}\right)\sum_{|r|\leq m}|r|k(r/m)\rme^{ir\omega_j} \label{eq:m2bias_2}\\
&-&  \frac{1}{2\pi}\sum_{|r| > m} \left(\frac{m}{r}\right)^2 r^2 \gamma_r \rme^{ir\omega_j}\label{eq:m2bias_3}
\end{eqnarray}
Hence
\begin{eqnarray}
\tau^2 n^{2/5} \left( \esp [\tilde{f}(\omega_j)]-f(\omega_j) \right) + h^{(2)}f^{2}(\omega_j)
&=& \tau^2 n^{2/5} \left( \esp [\tilde{f}(\omega_j)]-f(\omega_j) \right) + h^{(2)}\frac{1}{2\pi}\sum_{r=-\infty}^{\infty}r^2{\gamma}_r \rme^{ir\omega_j} \notag \\
&=& \frac{1}{2\pi}\sum_{|r|\leq m}\left[\frac{k(r/m)-1}{\left(\frac{r}{m}\right)^2}+h^{(2)}\right]r^2{\gamma}_r \rme^{ir\omega_j}\label{eq:decom_bias_1} \\
&-& \frac{1}{2\pi}\left(\frac{m^2}{n}\right)\sum_{|r|\leq m}|r|k(r/m)\rme^{ir\omega_j} \label{eq:decom_bias_2}  \\
&+&  \frac{1}{2\pi}\sum_{|r| > m}\left[ h^{(2)} - \left(\frac{m}{r}\right)^2 \right] r^2 \gamma_r \rme^{ir\omega_j}\label{eq:decom_bias_3}
\end{eqnarray}

\noindent Consider (\ref{eq:decom_bias_1}):
\begin{eqnarray}
\max_{\tau,\; j \in [1, \tn]}\left|\frac{1}{2\pi}\sum_{|r|\leq m}\left[\frac{k(r/m)-1}{\left(\frac{r}{m}\right)^2}+h^{(2)}\right]r^2{\gamma}_r \rme^{ir\omega_j}    \right|
& \leq & \max_{\tau,\; j \in [1, \tn]}\frac{1}{2\pi}\sum_{|r|\leq m}\left|\left[h^{(2)}-\frac{1-k(r/m)}{\left(\frac{r}{m}\right)^2}\right]r^2{\gamma}_r \rme^{ir\omega_j}    \right| \notag \\
& \leq & \max_{\tau}\frac{1}{2\pi}\sum_{|r|\leq m}\left|\left[h^{(2)}-\frac{1-k(r/m)}{\left(\frac{r}{m}\right)^2}\right]r^2{\gamma}_r \right| \notag \\
\end{eqnarray}

\noindent Let $\xi(m) = [k(r/m)-1]/(r/m)^2$.
Its partial derivative with respect to $m$ is equal to
\begin{equation}\label{eq:dev_k}
\frac{\partial \xi(m)}{\partial m} = 2(m/r^2)\left[k(r/m)-1 \right] - (m/r)^2\frac{\partial k(r/m)}{\partial m}(r/m^2)
\end{equation}
Since $0 \leq k(r/m) \leq 1$, the first term in (\ref{eq:dev_k}), $(m/r^2)\left[k(r/m)-1 \right] < 0$. Now consider $\frac{\partial k(r/m)}{\partial m}$.
Without loss of generality, assume $r > 0$. Assume that Parzen's lag window function is applied to estimate $f(\omega_j)$.
\begin{itemize}
\item [(1)] If $0 \leq r/m \leq 1/2$.  Then $k(r/m) = 1 - 6(r/m)^2[1-r/m]$.  Thus
\begin{equation}
\frac{\partial k(r/m)}{\partial m} = \frac{r^2}{m^3}\left[12-18\left(\frac{r}{m}\right)  \right]
\end{equation}
Since $r/m \leq 1/2$, $\frac{\partial k(r/m)}{\partial m}> 0$.  It follows that the second term in (\ref{eq:dev_k}) is also negative and thus
$ \frac{\partial \xi(m)}{\partial m} < 0$.
Also under the range $0 \leq r/m \leq 1/2$, it can be easily shown that $0 \leq (1-k(r/m))/(r/m)^2 \leq 6$, hence $h^{(2)} - (1-k(r/m))/(r/m)^2 \geq 0$.

\item[(2)] If $1/2 < r/m \leq 1$. Then $k(r/m) = 2[1-(r/m)]^3$. Thus
\begin{equation}
\frac{\partial k(r/m)}{\partial m} = 6\left[1-\left(\frac{r}{m}\right) \right]^2 \left(\frac{r}{m^2}\right) > 0 \;,
\end{equation}
and it follows that $ \frac{\partial \xi(m)}{\partial m} < 0$.
Under the range $1/2 < r/m \leq 1$, it can be shown that $3/4 \leq (1-k(r/m))/(r/m)^2 \leq 4$, and thus $h^{(2)} - (1-k(r/m))/(r/m)^2 \geq 0$.
\end{itemize}
By (1) and (2), we conclude that $h^{(2)} - (1-k(r/m))/(r/m)^2 \geq 0$, and that for fixed $n$ and $r$, $[k(r/m)-1]/(r/m)^2$ is monotone decreasing with $m$, and thus is monotone decreasing with $\tau$.
\\
Let $m_{\tau} = \tau n^{1/5}$, $\tau \in [u,v]$. It follows that
\begin{eqnarray}
\max_{\tau}\frac{1}{2\pi}\sum_{|r|\leq m}\left|\left[h^{(2)}-\frac{1-k(r/m)}{\left(\frac{r}{m}\right)^2} \right]r^2{\gamma}_r\right|
 &=& \frac{1}{2\pi}\sum_{|r|\leq m_u}\left[h^{(2)}-\frac{1-k(r/m_u)}{\left(\frac{r}{m_u}\right)^2} \right]\left|r^2{\gamma}_r\right| \;, \notag \\
 &+& \frac{1}{2\pi}\sum_{m_u < |r|\leq m}\left[h^{(2)}-\frac{1-k(r/m_u)}{\left(\frac{r}{m_u}\right)^2} \right]\left|r^2{\gamma}_r\right| \;. \notag \\
\end{eqnarray}

\noindent For fixed $\tau$ and $r \leq m_u$, the sequence $[h^{(2)}-\frac{1-k(r/m_u)}{(r/m_u)^2}]|\gamma_r r^2|$ is bounded by $h^{(2)}|\gamma_r r^2|$, where $\sum|\gamma_r|r^2 < \infty$, and it converges to zero as $n \to \infty$.
Hence by the dominated convergence theorem , we obtain
\begin{eqnarray}\label{eq:spec_est_bias_1}
\lim_{n\to \infty}\frac{1}{2\pi}\sum_{|r|\leq m_u}\left[h^{(2)}-\frac{1-k(r/m_u)}{\left(\frac{r}{m_u}\right)^2} \right]\left|r^2{\gamma}_r\right|
  &=& 0 \;,
\end{eqnarray}
$ \forall m_u < r \leq m$, the sequence $[h^{(2)}-\frac{1-k(r/m_u)}{(r/m_u)^2}]\gamma_r r^2$ is bounded by $h^{(2)}\gamma_r r^2$. As $n \to \infty$, $\sum_{|r|> m_u}r^2|\gamma_r| \to 0$.  Hence
\begin{eqnarray}\label{eq:spec_est_bias_1b}
\lim_{n\to \infty} \frac{1}{2\pi}\sum_{m_u < |r|\leq m}\left[h^{(2)}-\frac{1-k(r/m_u)}{\left(\frac{r}{m_u}\right)^2} \right]\left|r^2{\gamma}_r \right| &=& 0 \;.
\end{eqnarray}

\noindent Consider (\ref{eq:decom_bias_2}). Since $\sum_{|r|\leq m}|r k(r/m)| \leq \sum_{|r|\leq m}|r|$ ,
\begin{eqnarray}\label{eq:spec_est_bias_2}
\max_{\tau,\; j \in [1, \tn]}\left|\frac{1}{2\pi}\left(\frac{m^2}{n}\right)\sum_{|r|\leq m}|r|k(r/m)\rme^{ir\omega_j}\right| & \leq & \max_{\tau,\; j \in [1, \tn]} \tau^2 n^{-3/5}\sum_{|r|\leq m}\left|rk(r/m)\rme^{ir\omega_j} \right|  \notag \\
& \leq & v^2 n^{-3/5}\sum_{|r|\leq m}\left| r k(r/m)\right| \to 0\;.
\end{eqnarray}
\noindent Finally consider (\ref{eq:decom_bias_3}).
\begin{eqnarray}
\max_{\tau,\; j \in [1, \tn]}\left|\frac{1}{2\pi}\sum_{|r| > m} \left[h^{(2)}-\left(\frac{m}{r}\right)^2\right] r^2 \gamma_r \rme^{ir\omega_j}\right|
& \leq & \max_{\tau,\; j \in [1, \tn]}\frac{1}{2\pi}\sum_{|r| > m}\left|\left( h^{(2)}-\frac{m^2}{r^2} \right)r^2{\gamma}_r \rme^{ir\omega_j} \right| \notag \\
& \leq & \max_{\tau} \frac{1}{2\pi}\sum_{|r| > m} \left( h^{(2)}-\frac{m^2}{r^2} \right) \left| r^2{\gamma}_r \right| \notag \\
& \leq &  \frac{1}{2\pi}\sum_{|r| > m_u}\left( h^{(2)}-\frac{m_u^2}{r^2} \right)\left|   r^2{\gamma}_r \right|
\end{eqnarray}

\noindent Since $|r| > m_u$,  the sequence $h^{(2)}-m_u^2/r^2$ is bounded by $h^{(2)}$. $\sum_{r=0}^{\infty}r^2|\gamma_r| < \infty$ implies that $\sum_{|r|> m_u}|\gamma_r|r^2 \to 0$.  Therefore,
\begin{eqnarray}
\lim_{n \to \infty}\frac{1}{2\pi}\sum_{|r| > m_u} \left( h^{(2)}-\frac{m_u^2}{r^2} \right) \left| r^2{\gamma}_r \right|
 & = & 0 \;.
\end{eqnarray}
\noindent Hence
\begin{equation}\label{eq:spec_est_bias_3}
\max_{\tau,\; j \in [1, \tn]}\left|\frac{1}{2\pi}\sum_{|r| > m} \left[h^{(2)}-\left(\frac{m}{r}\right)^2\right] r^2 \gamma_r \rme^{ir\omega_j}\right| \to 0\;.
\end{equation}

\noindent By (\ref{eq:spec_est_bias_1}), (\ref{eq:spec_est_bias_1b}), (\ref{eq:spec_est_bias_2}), and (\ref{eq:spec_est_bias_3}), we obtain
\begin{eqnarray}\label{eq:convg_bias_sq_1}
&& \max_{\tau,\; j \in [1, \tn]}\tau^{-2}\left|\tau^2 n^{2/5}\left(\esp [\tilde{f}(\omega_j)]-f(\omega_j)\right)  +  h^{(2)}f^{(2)}(\omega_j)\right| \notag \\
& \leq & u^{-2 }\max_{\tau,\; j \in [1, \tn]}\left|\tau^2 n^{2/5}\left(\esp [\tilde{f}(\omega_j)]-f(\omega_j)\right)  +  h^{(2)}f^{(2)}(\omega_j)\right| \to 0\;.
\end{eqnarray}

\noindent Finally, we will show that $\max_{\tau,\; j \in[1,\tn]}\left|n^{2/5}\left(\esp \tilde{f}(\omega_j)-f(\omega_j)\right)-\tau^{-2}h^{(2)}f^{(2)}(\omega_j)\right| < \infty $.
\\
In view of (\ref{eq:m2bias_1}), (\ref{eq:m2bias_2}), and (\ref{eq:m2bias_3}),
\begin{eqnarray}
&& \max_{\tau,\; j \in [1, \tn]}\left|n^{2/5}\left(\esp \tilde{f}(\omega_j)-f(\omega_j)\right)-\tau^{-2}h^{(2)}f^{(2)}(\omega_j)\right| \notag \\
& \leq & \max_{\tau,\; j \in [1, \tn]}\left\{\left|n^{2/5}\left(\esp \tilde{f}(\omega_j)-f(\omega_j)\right)\right|  + \left|\tau^{-2}h^{(2)}f^{(2)}(\omega_j)\right| \right\}\notag \\
& \leq & \max_{\tau,\; j \in [1, \tn]} \left\{\frac{1}{2\pi}\sum_{|r|\leq m}\left|\left(\frac{k(r/m)-1}{\left(\frac{r}{m}\right)^2}\right)r^2{\gamma}_r \right|
+ \frac{1}{2\pi}\left(\frac{m^2}{n}\right)\sum_{|r|\leq m}\left|r|k(r/m)\right| \right.\notag \\
&+& \left. \frac{1}{2\pi}\sum_{|r| > m} \left| \left(\frac{m}{r}\right)^2 r^2 \gamma_r \right|
+ \left|\tau^{-2}h^{(2)}f^{(2)}(\omega_j) \right| \right\}
\end{eqnarray}

\noindent Since $(k(r/m)-1)/(r/m)^2$ is bounded and converges to $h^{(2)}$ and $\sum_{r=0}^{\infty}r^2|\gamma_r|< \infty$, and when $r > m$, $m^2/r^2$ converges to zero as $n \to \infty$, together with (\ref{eq:spec_est_bias_2}), we show that
\begin{eqnarray}\label{eq:convg_bias_sq_2}
\max_{\tau,\; j \in [1, \tn]}\left|n^{2/5}\left(\esp \tilde{f}(\omega_j)-f(\omega_j)\right)-\tau^{-2}h^{(2)}f^{(2)}(\omega_j)\right| < \infty \;.
\end{eqnarray}

\noindent By (\ref{eq:convg_bias_sq_1}), (\ref{eq:convg_bias_sq_2}) and $f(\omega_j) < \infty$, we show (\ref{eq:decomp_sq_diff}) converges to zero.  
By the convergence of  (\ref{eq:covg_b0_tau1}) and (\ref{eq:convg_b0_tau2}), we prove  
\[
\max_{\tau}\left|{n}^{4/5-c}\sum_{j=1}^{\tn}\left(\esp[\tilde{f}(\omega_j)]/f(\omega_j) -1\right)^2 - b_0(\tau)\right| \stackrel{}{\rightarrow} 0 \;.
\]

\end{proof}

\begin{lemma}\label{lem:Willa_Cliff}
Define $D_{\tn}(\omega_r) = \frac{\sin(\pi r(\tn/n))}{\tn\sin(\pi r /n)}$, where $\omega_r = 2\pi r/n$ and $\tilde{n} = n^c$, $0 < c < 1$.
Let $m = \tau n^{1/5}$. If $\omega_r \in (-\pi, \pi)$, then
\begin{eqnarray*}
\sum_{0 < |r| \leq m} |D_{\tn}(\omega_r)| &=& O(n^{1-c}\log n)
\end{eqnarray*}
\end{lemma}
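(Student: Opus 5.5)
The plan is to bound each summand directly: bound the numerator trivially and the denominator from below by Jordan's inequality, then sum a harmonic series. No cancellation in the numerator is needed.

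\emph{Step 1 (numerator).} For every $r$ we have $|\sin(\pi r \tn/n)| \leq 1$, so
\[
|D_{\tn}(\omega_r)| \leq \frac{1}{\tn\,|\sin(\pi r/n)|}.
\]

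\emph{Step 2 (denominator).} The hypothesis $\omega_r = 2\pi r/n \in (-\pi,\pi)$ gives $|\pi r/n| < \pi/2$. On $[0,\pi/2]$ Jordan's inequality gives $\sin x \geq 2x/\pi$. Applying it with $x = \pi|r|/n$ yields
\[
|\sin(\pi r/n)| \geq \frac{2|r|}{n}.
\]
Since $m = \tau n^{1/5}$ is much smaller than $n/2$ for large $n$, this applies to every $r$ with $0<|r|\leq m$. Combining with Step 1,
\[
|D_{\tn}(\omega_r)| \leq \frac{n}{2\tn |r|} = \frac{n^{1-c}}{2|r|}, \qquad 0<|r|\leq m.
\]

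\emph{Step 3 (summation).} Summing over $0<|r|\leq m$ and using the harmonic-sum bound $\sum_{r=1}^{m} r^{-1} \leq 1+\log m$,
\[
\sum_{0<|r|\leq m}|D_{\tn}(\omega_r)| \leq n^{1-c}\sum_{r=1}^{m}\frac{1}{r} \leq n^{1-c}\,(1+\log m).
\]
Because $m=\tau n^{1/5}$, we have $\log m = \tfrac15\log n + \log\tau = O(\log n)$. Hence the sum is $O(n^{1-c}\log n)$, as claimed.

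\emph{Main obstacle.} There is no real obstacle. The only point needing care is to invoke Jordan's inequality only where it is valid, namely $|\pi r/n|\leq \pi/2$, which the condition $\omega_r\in(-\pi,\pi)$ guarantees. The argument in fact gives the slightly sharper bound $O(n^{1-c}\log m)$.
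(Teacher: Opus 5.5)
Your proof is correct and follows essentially the same route as the paper: bound each term by a constant times $n^{1-c}/|r|$, then sum the harmonic series up to $m=\tau n^{1/5}$. The only difference is that the paper obtains the termwise bound $|D_{\tn}(\omega)|\le C^{\ast}\min(1,(\tn|\omega|)^{-1})$ by citing Lemma 0 of Chen and Hurvich (1998), whereas you derive it directly from $|\sin|\le 1$ and Jordan's inequality, which makes the argument self-contained with an explicit constant.
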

\begin{proof}
By Lemma 0 in Chen and Hurvich (1998), there exists a finite constant $C^{\ast}$ such that for $\omega \in (-\pi, \pi)$,
$|D_{\tn}(\omega)| \leq C^{\ast} \min(1, {(\tn |\omega|)}^{-1})$.
Since $0 < |r| \leq m$, if $\omega_r \in (-\pi, \pi)$, then for $n \geq (2 \tau)^{5/4}$,
\begin{eqnarray}
\sum_{0 <  |r| \leq m}\left|D_{\tn}(\omega_r)\right|
 &\leq & 2 \sum_{0 < r \leq m} C^{\ast}\min\left(1,\; \frac{1}{\tn\omega_r} \right) \notag \\
 &\leq & 2 \sum_{0 <  r \leq m} C^{\ast}\left(\frac{1}{\tn\omega_r}\right) \notag \\
 & = &    n^{1-c}\frac{C^{\ast}}{\pi}\sum_{r=1}^{\lfloor m \rfloor} \frac{1}{r}
 \sim   n^{1-c} \log (n)
\end{eqnarray}

\end{proof}

\begin{corollary}\label{cor:conv_fft_lag_win}
If $\omega_r \in (-\pi, \pi)$ and $4/5 < c < 1$, then
\[
n^{-1/5}\sum_{0 < r \leq m}|D_{\tn}(\omega_r)| \to 0 \;.
\]
\end{corollary}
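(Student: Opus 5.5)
The corollary follows directly from Lemma \ref{lem:Willa_Cliff}. I will compare the one-sided sum with the two-sided sum bounded there, multiply by $n^{-1/5}$, and check that the exponent is negative exactly when $c>4/5$.

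First, for $n$ large (specifically $n \ge (2\tau)^{5/4}$, so that $m<n/2$), every index $0<r\le m$ satisfies $\omega_r = 2\pi r/n \in (0,\pi)$. The hypothesis of Lemma \ref{lem:Willa_Cliff} therefore holds for all terms in the sum. Since each summand is nonnegative,
\[
\sum_{0<r\le m}|D_{\tn}(\omega_r)| \le \sum_{0<|r|\le m}|D_{\tn}(\omega_r)| = O(n^{1-c}\log n).
\]
Equality up to a factor of two also holds, because $D_{\tn}$ is even in $r$, but the inequality is all we need.

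Second, multiplying by $n^{-1/5}$ gives
\[
n^{-1/5}\sum_{0<r\le m}|D_{\tn}(\omega_r)| = O\!\left(n^{4/5-c}\log n\right).
\]
When $4/5<c<1$, the exponent $\delta = c-4/5$ is strictly positive, and $n^{-\delta}\log n \to 0$ because any positive power of $n$ dominates $\log n$. This proves the claim.

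Two points need care, though neither is a real obstacle. The bound in Lemma \ref{lem:Willa_Cliff} comes from the harmonic sum $\sum_{r\le m} 1/r \sim \log m$. With $m=\tau n^{1/5}$ this is $\tfrac15\log n + \log\tau$, so the constant depends on $\tau$. It is nevertheless uniform over $\tau$ in a compact interval $[u,v]$ (take $m_v$), which matters if the corollary is later used with a maximum over $\tau$. The other point is the threshold: the condition $c>4/5$ is exactly what makes the $n^{1-c}$ growth of the Dirichlet-type sum beaten by the $n^{-1/5}$ normalization. This is presumably why the theorem needs $c>4/5$: the cross terms in $\sum_j \esp[{g^{\ast}_j}^2]$ produce this sum.
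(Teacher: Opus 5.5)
Your proposal is correct and matches the paper, which gives no separate proof and treats the corollary as an immediate consequence of Lemma~\ref{lem:Willa_Cliff}: you bound the one-sided sum by the two-sided one and multiply by $n^{-1/5}$, which gives $O(n^{4/5-c}\log n)\to 0$ for $c>4/5$. Your extra checks, that $\omega_r\in(0,\pi)$ once $m<n/2$ and that the bound is uniform for $\tau\in[u,v]$, are correct.
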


\begin{lemma}\label{lem:convg_exp_g}
Let $\kappa=\int_{\Rset}k^2(x)\rmd x$. Assume $\int_{\Rset}k^2(x)\rmd x < \infty$ and $\int_{\Rset}|x|k^2(x)\rmd x < \infty$.
\\
If $4/5 < c < 1$,
\begin{equation}\label{eq:convg_exp_g}
 \left|n^{4/5} \esp\left[\frac{1}{\tn}\sum_{j=1}^{\tn}{g^{\ast}_j}^2\right]-\tau \kappa \right| \rightarrow 0 \;,
\end{equation}
where $\tau$ is the tuning parameter for the bandwidth $m=\tau n^{1/5}$. 
\end{lemma}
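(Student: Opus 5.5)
The plan is to compute $\esp[{g^\ast_j}^2]$ exactly using the iid structure of $\{\epsilon_t\}$, and then average over $j\le\tn$. Write $2\pi g_j=\sum_{|r|\le m}k(r/m)\rme^{ir\omega_j}\hat\gamma_{\epsilon_r}$. Since $\esp\hat\gamma_{\epsilon_0}=1$ and $\esp\hat\gamma_{\epsilon_r}=0$ for $r\ne0$, we have $g^\ast_j=(\hat\gamma_{\epsilon_0}-1)+\sum_{0<|r|\le m}k(r/m)\rme^{ir\omega_j}\hat\gamma_{\epsilon_r}$. The needed second moments are elementary. For $r,s\ge1$, $\esp[\hat\gamma_{\epsilon_r}\hat\gamma_{\epsilon_s}]=\delta_{rs}(n-r)/n^2$ because $\esp[\epsilon_t\epsilon_{t-r}\epsilon_u\epsilon_{u-s}]$ is nonzero only when $t=u,r=s$. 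Also $\esp[(\hat\gamma_{\epsilon_0}-1)\hat\gamma_{\epsilon_r}]=0$ for $r\ne0$, and $\var(\hat\gamma_{\epsilon_0})=(\esp\epsilon^4-1)/n$. The fourth moment enters only through the $r=0$ term.

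Using $\hat\gamma_{\epsilon_{-r}}=\hat\gamma_{\epsilon_r}$ and pairing $\pm r$ gives
\[
\esp[{g^\ast_j}^2]=\frac{\esp\epsilon^4-1}{n}+\sum_{0<|r|\le m}k^2(r/m)\frac{n-|r|}{n^2}\bigl(1+\rme^{2ir\omega_j}\bigr),
\]
possibly with a harmless conjugate depending on whether one reads the square as $|g^\ast_j|^2$; since $g_j$ is real the expression is real. After multiplying by $n^{4/5}$, the contributions are as follows. The first term is $O(n^{-1/5})\to0$. The term with $1$ is $n^{-1/5}\sum_{0<|r|\le m}k^2(r/m)(1-|r|/n)$, which is a Riemann sum converging to $\tau\int k^2=\tau\kappa$. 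The Riemann-sum error is $O(n^{-1/5})$ by bounded variation of $k^2$, and the $|r|/n$ correction is $O(m^2/n)$. The integrability assumptions on $k^2$ and $|x|k^2$ are what justify this.

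The main obstacle is the oscillatory term, which after averaging over $j$ becomes
\[
n^{-1/5}\sum_{0<|r|\le m}k^2(r/m)\Bigl(1-\frac{|r|}{n}\Bigr)\frac{1}{\tn}\sum_{j=1}^{\tn}\rme^{2ir\omega_j}.
\]
This is exactly where locality matters: in Robinson's global version, Parseval makes the $j$-average vanish exactly. Here I will evaluate the geometric sum, $\bigl|\tn^{-1}\sum_{j=1}^{\tn}\rme^{i4\pi rj/n}\bigr|=|\sin(2\pi r\tn/n)/(\tn\sin(2\pi r/n))|$, which is a Dirichlet-kernel ratio of the type $D_{\tn}$ of Lemma \ref{lem:Willa_Cliff} at argument $2r$. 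Because $|r|\le m=o(n)$, the frequencies $4\pi r/n$ stay inside $(-\pi,\pi)$, so the Chen--Hurvich bound $C^\ast\min(1,(\tn|\omega|)^{-1})$ applies. Bounding $k^2\le1$, this term is at most $n^{-1/5}\sum_{0<|r|\le m}|D_{\tn}(\omega_{2r})|=O(n^{-1/5}n^{1-c}\log n)$. That bound tends to $0$ precisely when $c>4/5$, as recorded in Corollary \ref{cor:conv_fft_lag_win}. I would take care that the substitution $r\to2r$ only changes constants in the Lemma \ref{lem:Willa_Cliff} bound.

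Combining the three pieces gives $n^{4/5}\esp[\tn^{-1}\sum_j{g^\ast_j}^2]\to\tau\kappa$.
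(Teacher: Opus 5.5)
Your proposal is correct and follows the paper's proof essentially step for step. You compute $\esp[{g^\ast_j}^2]$ exactly from the iid structure, with the fourth moment entering only at $r=0$ and cross terms surviving only for $r=\pm s$; you take the Riemann-sum limit $\tau\kappa$ with a negligible $|r|/n$ correction; and you bound the $j$-averaged oscillatory term by $O(n^{4/5-c}\log n)$ via the Dirichlet-kernel bound of Lemma \ref{lem:Willa_Cliff} and Corollary \ref{cor:conv_fft_lag_win}, which is exactly where $c>4/5$ is needed. Writing $g^\ast_j=(\hat\gamma_{\epsilon_0}-1)+\cdots$ instead of using the paper's $\esp[(2\pi g_j)^2]-1$ is only a cosmetic difference.
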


\begin{proof}
Since $\esp[g^{\ast}_j]=0$ and $\esp[2\pi g_j] = 1$,  $\var(g^{\ast}_j)=\esp[{g_j^{\ast}}^2]=\esp[{(2\pi g_j-1)}^2] = \esp[{(2\pi g_j)}^2]-2\esp[2\pi g_j]+1=\esp[(2\pi g_j)^2]-1 $.
\\
By (\ref{def:g_j}), we can express
\begin{eqnarray*}
|2\pi g_j|^2
&=& \sum_r k\left(\frac{r}{m}\right)\rme^{ir\lambda_j} \hat{\gamma}_{\epsilon_r}\sum_s k\left(\frac{s}{m}\right)\rme^{-is\lambda_j} \hat{\gamma}_{\epsilon_s} \\
&=& \sum_r\sum_s  k\left(\frac{r}{m}\right) k\left(\frac{s}{m}\right) \hat{\gamma}_{\epsilon_r}\hat{\gamma}_{\epsilon_s}\rme^{i(r-s)\lambda_j} \;,
\end{eqnarray*}
and thus
\begin{eqnarray}\label{eq:pigj_sq}
\esp|2\pi g_j|^2 &=&  \sum_r\sum_s  k\left(\frac{r}{m}\right) k\left(\frac{s}{m}\right)\rme^{i(r-s)\lambda_j}\esp[\hat{\gamma}_{\epsilon_r}\hat{\gamma}_{\epsilon_s}]
\end{eqnarray}
Note that
\begin{eqnarray}
\esp[\hat{\gamma}_{\epsilon_r}\hat{\gamma}_{\epsilon_s}]
 &=& \esp \left[\left(\frac{1}{n}\sum_{t=|r|}^{n-1}\epsilon_t\epsilon_{t-|r|}\right)\left(\frac{1}{n}\sum_{k=|s|}^{n-1}\epsilon_k\epsilon_{k-|s|}\right)\right] \\
 &=& \frac{1}{n^2}\sum_{t=|r|}^{n-1}\sum_{k=|s|}^{n-1}\esp\left[\epsilon_t \epsilon_k \epsilon_{t-|r|}\epsilon_{k-|s|}\right] \\
 &=& \frac{1}{n^2}\sum_{t=|r|}^{n-1}\sum_{k=|s|}^{n-1}\esp\left[\epsilon_t \epsilon_k \epsilon_{t-|r|}\epsilon_{k-|s|}\right]\1{\{|r| = |s|\}} \label{eq:r_eq_s}\\
 &+& \frac{1}{n^2}\sum_{t=|r|}^{n-1}\sum_{k=|s|}^{n-1}\esp\left[\epsilon_t \epsilon_k \epsilon_{t-|r|}\epsilon_{k-|s|}\right]\1{\{|r| \neq |s|\}}\label{eq:r_neq_s}
\end{eqnarray}
where (\ref{eq:r_eq_s}) is equal to
\begin{eqnarray}
&&\frac{1}{n^2}\left\{\sum_{t=|r|}^{n-1}\sum_{k=|s|}^{n-1}\esp\left[\epsilon_t \epsilon_k \epsilon_{t-|r|}\epsilon_{k-|s|}\right]\1{\{r = s=0\}}
+ \sum_{t=|r|}^{n-1}\sum_{k=|s|}^{n-1}\esp\left[\epsilon_t \epsilon_k \epsilon_{t-|r|}\epsilon_{k-|s|}\right]\1{\{|r| = |s| \neq 0\}}\right\}
\notag \\
&=& \frac{1}{n^2}\left\{\left[n\esp[\epsilon^4_t] + n(n-1)\right]\1{\{r=s=0\}} + (n-|r|)\left(\1{\{r = s \neq 0\}}+ \1{\{r = -s \neq 0\}}\right)\right\}\label{eq:r_eq_s_out}
\end{eqnarray}
and (\ref{eq:r_neq_s}) is equal to
\begin{eqnarray}
&&\frac{1}{n^2}\sum_{t=|r|}^{n-1}\sum_{k=|s|}^{n-1}\esp\left[\epsilon_t \epsilon_k \epsilon_{t-|r|}\epsilon_{k-|s|}\right]\1{\{|r| \neq |s| \neq 0,\; t \neq k\}}\notag \\
&=& \frac{1}{n^2} \sum_{t=|r|}^{n-1}\sum_{k=|s|}^{n-1}\esp[\epsilon_t\epsilon_{k-|s|}]\esp[\epsilon_{t-|r|}\epsilon_k]\1{\{|r| \neq |s| \neq 0,\; t \neq k, \; t-k \neq |r|-|s|\}} \label{eq:t_k_eq_r_s} \\
&+&  \frac{1}{n^2} \sum_{t=|r|}^{n-1}\sum_{k=|s|}^{n-1}\esp[\epsilon_t\epsilon_{t-|r|}\epsilon_k\epsilon_{k-|s|}]\1{\{|r| \neq |s| \neq 0,\; t \neq k, \; t-k= |r|-|s|\}}\label{eq:t_k_neq_r_s}
%&=& \frac{(n-|r|)}{n^2}\esp[\epsilon_t\epsilon_{k-|s|}\epsilon_{t-|r|}\epsilon_k]\1{\{|r| \neq |s| \neq 0,\; t \neq k, \; t-k= |r|-|s|\}} \label{eq:r_neq_s_out}
\end{eqnarray}
For the $\esp[\epsilon_t\epsilon_{k-|s|}]\esp[\epsilon_{t-|r|}\epsilon_k]\neq 0$, we must have $t=k-|s|$ and $t-|r|=k$, which implies that $|s|+|r|=0$.  However, this cannot happen since $\esp[\epsilon_t\epsilon_{k-|s|}]\esp[\epsilon_{t-|r|}\epsilon_k]\1{\{|r| \neq |s| \neq 0,\; t \neq k, \; t-k \neq |r|-|s|\}}$.  Therefore, (\ref{eq:t_k_eq_r_s}) is zero.
Similarly, $\1{\{|r| \neq |s| \neq 0,\; t \neq k, \; t-k= |r|-|s|\}} \neq 0 $ implies that $t=k+|r|-|s|$ and . Thus
\begin{eqnarray}
&\esp[\epsilon_t\epsilon_{t-|r|}\epsilon_k\epsilon_{k-|s|}]\1{\{|r| \neq |s| \neq 0,\; t \neq k, \; t-k= |r|-|s|\}}
= \esp[\epsilon_{k+|r|-|s|}\epsilon_k\epsilon^2_{k-|s|}]\1{\{|r| \neq |s| \neq 0,\; t \neq k, \; t-k= |r|-|s|\}} \notag \\
&= \esp[\epsilon_k]\esp[\epsilon_{k+|r|-|s|}\epsilon^2_{k-|s|}]\1{\{|r| \neq |s| \neq 0,\; t \neq k, \; t-k= |r|-|s|\}}=0
\end{eqnarray}
and (\ref{eq:t_k_neq_r_s}) is zero.
\\
It follows
\begin{equation}
\frac{1}{n^2}\sum_{t=|r|}^{n-1}\sum_{k=|s|}^{n-1}\esp\left[\epsilon_t \epsilon_k \epsilon_{t-|r|}\epsilon_{k-|s|}\right]\1{\{|r| \neq |s| \neq 0,\; t \neq k\}}=0 \;,
\end{equation}
Put (\ref{eq:r_eq_s_out}) into (\ref{eq:pigj_sq}), we obtain
\begin{eqnarray}\label{eq:var_gj}
\esp[{g_j^{\ast}}^2] &=& \esp[(2\pi g_j)^2]-1 \notag \\
&=& \frac{1}{n}\left(\esp[\epsilon^4_t]-1 \right) + \sum_{r \neq 0}\left(\frac{n-|r|}{n^2}\right) k^2\left(\frac{r}{m}\right)+ \sum_{r \neq 0}\left(\frac{n-|r|}{n^2}\right) k^2\left(\frac{r}{m}\right) \rme^{2ir\lambda_j}
\end{eqnarray}
Hence
\begin{eqnarray}\label{eq:norm_esp_gjSqr}
n^{4/5}\esp\left[\frac{1}{\tn}\sum_{j=1}^{\tn}{g_j^{\ast}}^2\right] &=& n^{-1/5}\left(\esp[\epsilon^4_t]-1 \right) + n^{-1/5}\sum_{r \neq 0} k^2\left(\frac{r}{m}\right)
- n^{-6/5}\sum_{r \neq 0}|r| k^2\left(\frac{r}{m}\right) \notag \\
&+& n^{4/5}\sum_{r \neq 0}\left(\frac{n-|r|}{n^2}\right) k^2\left(\frac{r}{m}\right)\left( \frac{1}{\tn}\sum_{j=1}^{\tn}\rme^{2ir\lambda_j}\right)
\end{eqnarray}
where
\begin{equation}\label{eq:convg_ksqr}
n^{-1/5}\sum_{r \neq 0} k\left(\frac{r}{m}\right)^2 = \tau m^{-1}\sum_{r \neq 0} k\left(\frac{r}{m}\right)^2 \rightarrow \tau \int_{\Rset}k^2(x)\rmd x
\end{equation}
\begin{equation}\label{eq:convg_k_x_square}
n^{-6/5}\sum_{r \neq 0}|r| k\left(\frac{r}{m}\right)^2 = \frac{\tau m}{n}\sum_{r \neq 0} \frac{|r|}{m}k\left(\frac{r}{m}\right)^2\frac{1}{m}
\end{equation}
Since
\[
\sum_{r \neq 0} \frac{|r|}{m}k\left(\frac{r}{m}\right)^2\frac{1}{m}\rightarrow \int_{\Rset}|x|k^2(x)\rmd x\;,
\]
Assume $\int_{\Rset}|x|k^2(x)\rmd x < \infty$.
It follows that (\ref{eq:convg_k_x_square}) converges to $0$ as $n \to \infty$.
\\
Next, consider the last term in (\ref{eq:norm_esp_gjSqr}). Note that
\begin{eqnarray*}
\frac{1}{\tn}\sum_{j=1}^{\tn}\rme^{2ir\lambda_j} &=& \frac{1}{\tn}\left[\rme^{i4\pi r/n}\left(\frac{1-\rme^{i4\pi r \tn/n}}{1-\rme^{i4\pi r/n}}\right)   \right]\\
&=& \rme^{i4\pi r/n} \rme^{i2\pi r(\tn-1)/n}\left[\frac{\sin(2\pi r(\tn/n))}{\tn\sin(2 \pi r /n)}   \right] \\
&=& \rme^{i2\pi r(1+\tn)/n}\left[\frac{\sin(2\pi r(\tn/n))}{\tn\sin(2 \pi r /n)} \right]
\end{eqnarray*}
Hence
\begin{eqnarray}
&& n^{4/5}\sum_{r \neq 0}\left(\frac{n-|r|}{n^2}\right) k\left(\frac{r}{m}\right)^2 \frac{1}{\tn}\sum_{j=1}^{\tn}\rme^{2ir\lambda_j}\notag\\
&=& n^{4/5}\sum_{r \neq 0}\left(\frac{n-|r|}{n^2}\right) k\left(\frac{r}{m}\right)^2\rme^{i2\pi r(\tn+1)/n}\left[\frac{\sin(2\pi r(\tn/n))}{\tn\sin(2 \pi r /n)}   \right] \label{eq:eva_all}\\
&=& n^{-1/5}\sum_{r \neq 0}\left(1-\frac{|r|}{n}\right) k\left(\frac{r}{m}\right)^2\rme^{i2\pi r(\tn+1)/n}\left[\frac{\sin(2\pi r(\tn/n))}{\tn\sin(2 \pi r /n)}   \right] \label{eq:eva_1}
\end{eqnarray}
\\
Let $\omega_{2r} = 4 \pi r / n$. Denote $D_{\tn}(\omega_{2r})=\frac{\sin(2\pi r(\tn/n))}{\tn\sin(2 \pi r /n)}$.
Note that $0 \leq k(r/m) \leq 1$, if $0 < |r| \leq m$; $k(|r|/m)=0$, if $|r| > m$; $k(r/m)=k(-r/m)$.
\\
It follows that
\begin{eqnarray}
&& \left| n^{-1/5}\sum_{r \neq 0} \left(1-\frac{|r|}{n}\right)k\left(\frac{r}{m}\right)^2\rme^{i2\pi r(\tn+1)/n}\left[\frac{\sin(2\pi r(\tn/n))}{\tn\sin(2 \pi r /n)} \right] \right| \notag \\
& = & \left| n^{-1/5}\sum_{0 < |r| \leq m}\left(1-\frac{|r|}{n}\right) k\left(\frac{r}{m}\right)^2\rme^{i(\tn+1)\omega_{2r}}D_{\tn}(\omega_{2r})\right| \notag \\
& \leq & n^{-1/5}\sum_{0 < |r| \leq m} k\left(\frac{r}{m}\right)^2 |D_{\tn}(\omega_{2r})|  \notag \\
& \leq & n^{-1/5}\sum_{0 < |r| \leq m}|D_{\tn}(\omega_{2r})|
\label{eq:eva_2}
\end{eqnarray}
By Corollary \ref{cor:conv_fft_lag_win}, for every $|r| \leq m$, if $\omega_{2r} \in (-\pi, \pi)$, where $n \geq (4 \tau)^{5/4}$, and if $c > 4/5$,
(\ref{eq:eva_2}) converges to 0.
\\
In view of the convergence results of (\ref{eq:norm_esp_gjSqr}), (\ref{eq:convg_ksqr}), (\ref{eq:convg_k_x_square}), and (\ref{eq:eva_2}),  we prove (\ref{eq:convg_exp_g}).

\end{proof}

\end{document}